\def\ps@headings{%
\def\@oddhead{\mbox{}\scriptsize\rightmark \hfil \thepage}%
\def\@evenhead{\scriptsize\thepage \hfil \leftmark\mbox{}}%
\def\@oddfoot{}%
\def\@evenfoot{}}
\newtheorem{theorem}{\textbf{Theorem}}[section]
\newtheorem{lemma}[theorem]{\textbf{Lemma}}
\newtheorem{definition}[theorem]{\textbf{Definition}}
\begin{document}

\IEEEoverridecommandlockouts

\title{When Queueing Meets Coding: Optimal-Latency Data Retrieving Scheme in Storage Clouds}

\author{Shengbo Chen\\
 Department of ECE\\
 Ohio State University\\
 chens@ece.osu.edu\\
\and
Yin Sun\\
 Department of ECE\\
The Ohio State University\\
sunyin02@gmail.com\\
\and
Ula\c{s} C. Kozat\\
DOCOMO Innovations, Inc.\\
Palo Alto, CA, 94304\\
kozat@docomoinnovations.com\\
\and
Longbo Huang\\
IIIS, Tsinghua University\\
longbohuang@tsinghua.edu.cn\\
\and
Prasun Sinha\\
 Department of CSE\\
 Ohio State University\\
prasun@cse.ohio-state.edu\\
\and
Guanfeng Liang\\
DOCOMO Innovations, Inc.\\
 Palo Alto, CA, 94304\\
gliang@docomoinnovations.com\\
\and
Xin Liu\\
 Department of CS\\
UC Davis\\
liu@cs.ucdavis.edu\\
\and
Ness B. Shroff\\
 Department of ECE and CSE\\
The Ohio State University\\
shroff@ece.osu.edu\\
}

%
%
%


\maketitle


\begin{abstract}

Storage clouds, such as Amazon S3,  are being widely used for web services and Internet applications. It has been observed that the delay for retrieving data from and placing data into the clouds is quite random, and exhibits weak correlations between different read/write requests. This inspires us to investigate a key problem: can we reduce the delay by  transmitting data replications in parallel or using powerful erasure codes?

In this paper, we study the problem of reducing the delay of  downloading data from cloud storage systems by leveraging multiple parallel threads, assuming that the data  has been encoded and stored in the clouds using fixed rate forward error correction (FEC) codes with parameters $(n,k)$. That is, each file is divided into $k$ equal-sized chunks, which are then expanded into $n$ chunks such that any $k$ chunks out of the $n$ are sufficient to successfully restore the  original file. The model can be depicted as a multiple-server queue with arrivals of data retrieving requests and a server corresponding to a thread. However, this is not a typical queueing model because a server can terminate its operation, depending on when other servers complete their service (due to the redundancy that is spread across the threads). Hence, to the best of our knowledge, the  analysis of this queueing model remains quite uncharted.

Recent traces from Amazon S3 show that the time to retrieve a fixed size chunk is random and can be {approximated as a constant delay plus an \textit{i.i.d.} exponentially distributed random variable. For the tractability of the theoretical analysis, we assume that the chunk downloading time is \textit{i.i.d.} exponentially distributed. Under this assumption,} we show that any work-conserving scheme is delay-optimal {among all on-line scheduling schemes} when $k=1$. When $k>1$, we find that a simple greedy scheme, which allocates all available threads to the head of line request, is delay optimal {among all on-line scheduling schemes. We also provide some numerical results that point to the limitations of the exponential assumption, and suggest further research directions.}
\end{abstract}

 \section{INTRODUCTION}\label{intro}
Cloud storage, an essential element of cloud computing, has been rapidly expanding, backed by technology giants, such as Amazon, Google, Microsoft, IBM, and Apple, as well as now popular startups, such as Dropbox, Rack-space, and NexGen. The total cloud storage market is expected to grow from \$5.6 billion in 2012 to \$46.8 billion by 2018 with a compound annual growth rate  of 40.2\% \cite{CloudStorageReport}.  Cloud storage systems provide users with easy access, low maintenance, flexibility, and scalability.

\begin{figure}[!t]
\begin{center}
\includegraphics[width = 0.9\linewidth]{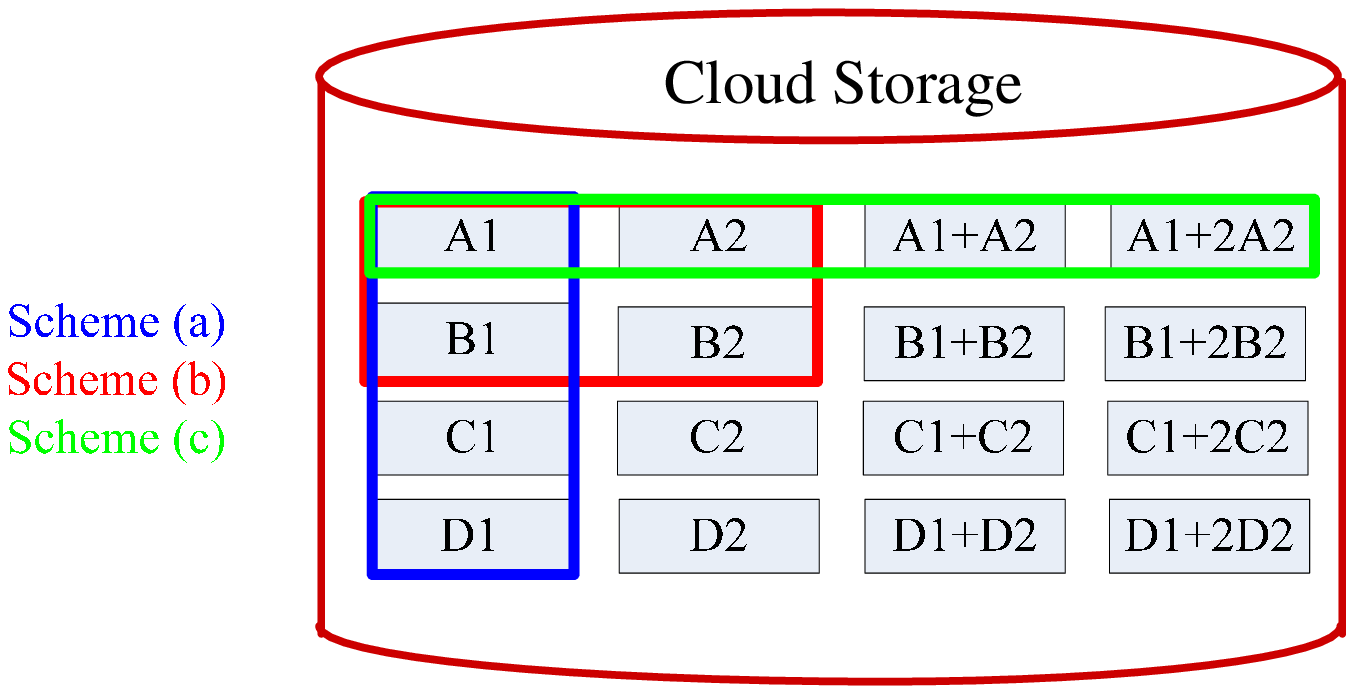}
\caption{An example of four files using  FEC codes with parameters (4,2) in cloud storage and three thread allocation schemes}
\label{fig:mdscode}
\end{center}
\end{figure}

\IEEEpubidadjcol

 As in all storage systems, efficiency, reliability, and latency are critical requirements for cloud storage systems. Although simple duplications can be used for reliability, the most promising storage codes are forward-error-correction (FEC) codes, such as maximum-distance-separable (MDS) codes, which provide
a better resiliency to erasures than duplication for a given amount of redundancy. In this paper, we consider FEC codes with fixed coding parameters $(n,k)$, i.e., each file is divided into $k$ equal-sized chunks, which are then expanded into $n$ chunks such that any $k$ out of $n$ chunks are sufficient to successfully restore the $k$ original chunks (hence, the file itself). Fig. \ref{fig:mdscode} shows an example, where $n=4$ and $k=2$. In particular, four equal-length files A, B, C, and D are stored in the cloud using FEC codes with parameters (4,2). Each file $x\in \{A, B, C,D\}$ is partitioned into two equal-length chunks, $[x_1,x_2]$, and its  four coded chunks, $[x_1$, $x_2$, $x_1+x_2$,  $x_1+2x_2$], are stored in the cloud. A file can be retrieved from any two of its four chunks. In this manner, different levels of efficiency and reliability can be achieved by adjusting the parameters $(n,k)$ of FEC,  where  we have $k>1$ in general, while  $k=1$ simply means data duplication.   Some online
file storage companies have already adopted such FEC codes, such as Wuala \cite{wuala}.

While reliability and efficiency have been carefully studied in cloud storage systems, its  performance in terms of delay has received much less attention, even though delay is a critically important issue that significantly affects user experience and can play a major role in the widespread adoption of these systems.
 Measurement studies show that there exists a significant skew in network bound I/O performance  \cite{addShengboOriginal}. Earlier evaluations of Amazon S3 indicate that the slowest 10-20\% of read/write requests see more than 5$\times$ of the mean delays (e.g., see \cite{Garfinkel07anevaluation}). The experimental study of Amazon S3 in \cite{addShengboOriginal} shows a similar trend. In particular, the average delay of reading a 1MB file is 139\textit{msec}, with 80\% delay as 179\textit{msec}, 95\% as 303\textit{msec}, 99.9\% as 811\textit{msec}, respectively.
It has been shown that delay tail has a large impact on user experience and service provider revenue, e.g.,  every 500 ms extra delay in service will lead to a 1.2\% user loss for Google and Amazon \cite{lu-jiq-2011}. Thus, it is  important to reduce the delay spread, i.e., cutting the long tail of the service time.

\begin{figure}[!t]
\begin{center}
\includegraphics[width = 0.95\linewidth]{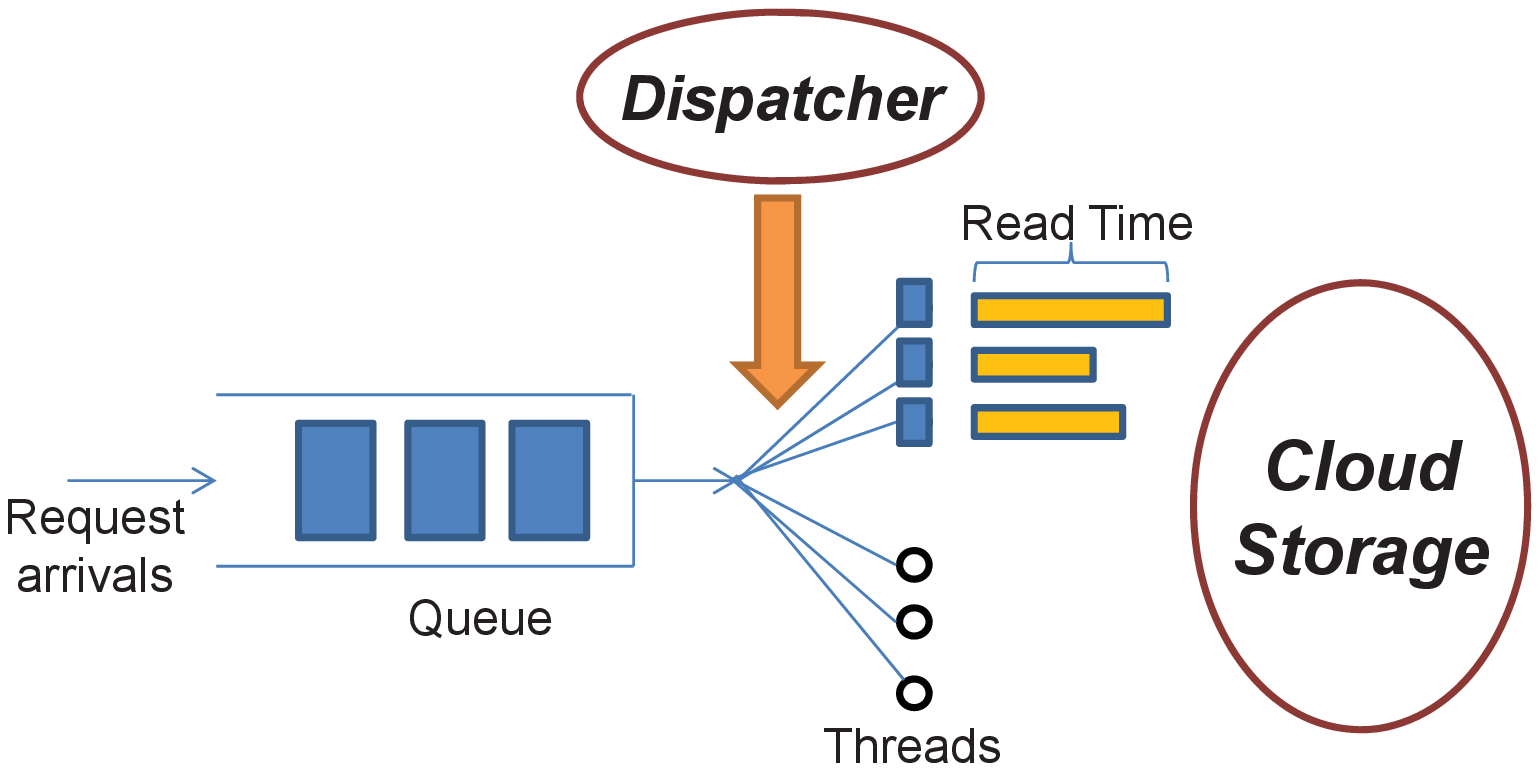}
\caption{The architecture for data retrieval through a multiple-thread dispatcher.}
\label{fig:arch}
\end{center}
\end{figure}

%

 It has been revealed  that the delay exhibits weak correlations between different read tasks \cite{addShengboOriginal,Garfinkel07anevaluation}. Thus, the delay can be reduced by simply transmitting data replications in parallel.
This motivates us to investigate the critical issue of improving the system latency of data retrieval. In particular, a key question we ask is: \emph{can we leverage the inherent redundancy provided by FEC for reliability to improve delay performance?}
   Assume that a user requests to download  files, i.e., their encoded chunks, from the cloud storage systems with a dispatcher. The dispatcher  schedules these downloading requests with a number of threads, where each thread can be used to retrieve one chunk each time. The number of threads that are assigned to the user is based on the function of the service that the user is willing to pay for --- the higher the payment, the greater the number of threads allocated to the user. The model is depicted in Fig. \ref{fig:arch}.
A typical application scenario is given as
follows: an online game user needs to load and save (read and write) his data into the clouds using FEC codes. The user is allocated with a number of threads, where the number of threads is determined by how much the user is willing to spend.

In this paper, we  focus on the delay of read requests (Note that our results can
be also applied to write requests).    We study the read delay induced by different thread allocation
schemes, i.e., how the threads are assigned to  retrieve distinct file chunks.
 To make the problem more concrete, consider the four encoded files A, B, C, and D, shown in Fig. \ref{fig:mdscode}. An user  requesting
these four files  has four threads and can allocate these  four  threads using different allocation
schemes, as shown in Fig.~\ref{fig:mdscode}. In scheme (a),  one thread is allocated to each file, i.e., four threads are  allocated to download A1, B1, C1 and D1, depicted by the blue box. After  downloading these chunks, the threads
are then scheduled to download the second chunks of all four files.
Scheme (b), depicted by
the red box, provides  parallelism where the two chunks of file A/B, i.e., A1, A2, B1 and B2, are retrieved in parallel. Scheme (c) further exploits the parallelism, and allocates all four threads to file A, each requesting a chunk, i.e., A1, A2, A1+A2 and A1+2A2, as depicted by
the green box. As long as any two of the four  chunks  are retrieved, the dispatcher immediately \textit{terminates} the other threads  and allocates them to the next file. Clearly,  the service delay of file A is minimized by scheme (c) since the file can be constructed from the first two successful chunks. On the other hand,  scheme (a), compared to the previous two schemes, allows  a greater
number of parallel requests that can be served, i.e., four for scheme (a), two for scheme (b)
and one for scheme (c). Therefore, there exists a tradeoff between the service time of a file and parallelism in data retrieval. And it is interesting to ask which scheme has the best delay performance.


In this paper, we resort to queueing theory and notice that the model can be depicted as a multiple-server queue  with arrivals of data retrieving requests and a server corresponding to a thread. However, this is not a typical queueing model because a server can terminate its operation, depending on when other servers complete their service (due to the redundancy that is spread across the threads).
{Based on observations made in \cite{addShengboOriginal}, on real traces that are measured over Amazon
S3 (see Section \ref{measurement}), the time to retrieve a fixed size chunk is random and can be approximated as a constant delay plus an \textit{i.i.d.} exponentially distributed random variable. The randomness
is because the communication within the cloud causes random delays.}

%
%

In this paper, we make the following contributions:
\begin{itemize}
\item
We propose a queueing architecture that leverages the coding redundancy inherent in cloud storage systems, to improve file retrieve latency performance. In particular, we present a new queueing model to study data retrieve latency,  where each file is encoded and stored in the cloud using FEC codes.

\item
{Under the simplifying assumption that the chunk downloading time is \textit{i.i.d.} exponentially distributed, we analyze the delay performance of different scheduling schemes.} 
When $k=1$, we show that any work-conserving scheme that fully utilizes all available threads is delay-optimal {among all on-line scheduling schemes}.
For the case when $k>1$, we  prove that a simple greedy policy is delay-optimal {among all on-line scheduling schemes}.
This is a somewhat surprising result, as delay optimality is rather strong
in general, which also implies throughput optimality.

\end{itemize}


The organization of the paper is as follows. We discuss related work in
Section \ref{related}. In Section \ref{measurement}, we present our measurement results over Amazon S3. Section  \ref{system}
describes  the system model.
In Sections \ref{kequalone} and \ref{kgreaterone}, we  present the results
under the cases of $k=1$ and $k>1$, respectively. The simulation results are presented in  Section \ref{simulation}.  We then conclude our paper in Section \ref{conclusion}.

%
%
%

  \section{RELATED WORK}\label{related}
The use of coding to improve the efficiency of large-scale data storage systems has received significant attention  \cite{dimakis-2010}. For instance, the authors of \cite{dimakis-2009} used interference alignment for reducing repair traffic for storage systems with erasure coding; in \cite{rashmi-2010}, the authors constructed explicit regenerating codes for achieving minimum bandwidth in distributed  storage systems; in \cite{leong-2009}, the authors considered the problem of allocating capacity for optimal data storage.
However, most of the existing works have focused mainly on utilizing coding to reduce repair bandwidth and storage capacity for storage systems (for more discussion, please see \cite{coding-survey-dimakis} and the references therein).
%

Recently, due to the increasing importance of service latency (i.e., delay) as a system performance metric, e.g., for Google and Amazon, every $500$ ms extra delay in service will lead to a $1.2\%$ user loss \cite{lu-jiq-2011}, researchers have started to study the effect of coding on content retrieval delay for data storage systems.
The work \cite{alex-delay-linenet09} considered delivering a set of packets over a linear network with minimum delay. In \cite{ferner-2012}, coding was used to reduce the blocking probability in storage networks.
In \cite{ali-poe-2012}, coding was used as a way to prevent service interruption. In \cite{RRighter} and \cite{RRighter1}, the authors investigated the performance of using multiple servers to download file replications without considering coding. The authors in \cite{addShengboOriginal} proposed a heuristic transmission control scheme by dynamically adjusting the coding parameters which demonstrates good delay performance in cloud storages. 

The most related theoretical results that we know of are \cite{huang-isit-2012,shah-mdsq-2012,Joshi-2012}, which investigated how to assign the storage disks to serve the read requests for reducing delay. In \cite{huang-isit-2012}, all the requests are put in a centralized queue and the authors showed that FEC codes can reduce the data retrieving delay compared to simple data replications. In \cite{shah-mdsq-2012}, the authors proved that flooding requests to all storage disks, rather than a subset of disks, has a shorter data retrieving delay for the centralized queueing model. In \cite{Joshi-2012}, the requests are dispatched to multiple local queues at the storage disks and it was showed that coding reduces data retrieving time. Different from \cite{huang-isit-2012,shah-mdsq-2012,Joshi-2012}, we focus on the systems with limited downloading threads (bandwidth) and study how to allocate the downloading threads to exploit the storage redundancy and minimize the data retrieving delay. Since our system model is quite different from the existing studies, novel proof techniques are employed.

 \section{Measurement Over Amazon S3}\label{measurement}

\begin{figure}[!t]
\centering
\includegraphics[width=0.8\columnwidth]{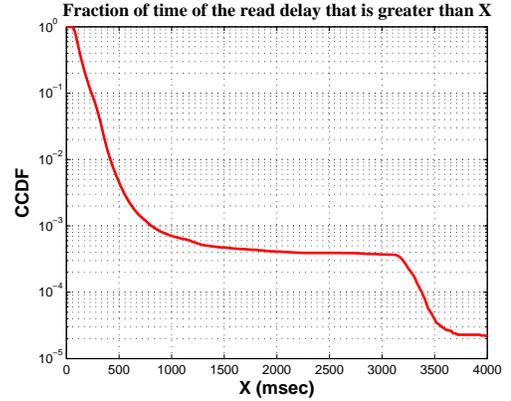}
\caption{CCDF of read delay for 1MB chunk.}
\label{fig_1_2}
\end{figure}
\begin{figure}[t]
\centering
\includegraphics[width=0.7\columnwidth]{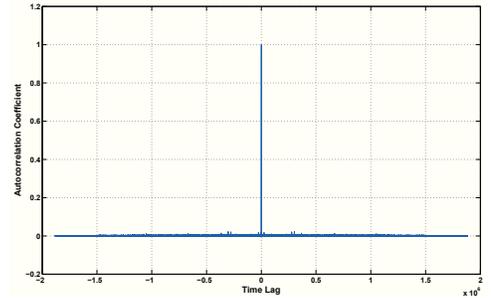}
\caption{ Correlation between time ordered delay samples.}
\label{fig_1_3}
\end{figure}

In  this section,  we  describe some measurement results  of the read delay using Amazon S3 made by our coauthors from Docomo Innovations, Inc. \cite{addShengboOriginal}.
 Please refer to \cite{addShengboOriginal} for more
details.

Fig. \ref{fig_1_2} plots the complementary cumulative distribution function (CCDF) of the delay for downloading a file of size 1MB. We can see that the downloading time indeed observes a wide spread, although the file size is the same across all experiments.  Another observation is that despite the delay floors observed at very low percentiles (e.g., beyond the 99.9th percentile), up to
99th percentile, the CCDF is close to a linear term in delay and note that the y-axis is logarithmic. {This indicates that the chunk downloading time can be approximated as a constant delay plus an \textit{i.i.d.} exponentially distributed random variable.}


Fig. \ref{fig_1_3} shows the autocorrelation coefficient between the service times of consecutive read tasks. Note that the mean delay is subtracted from the delay samples. We can see that there is negligible correlation between consecutive download delays. 



\section{SYSTEM MODEL}\label{system}


As mentioned in the introduction, we consider a  cloud storage system, where data is stored using  FEC codes with coding parameters $(n,k)$, i.e., each
file has $n$ chunks stored in the clouds and any  $k$ out of the  $n$ chunks are sufficient to successfully restore the  original file. We assume that  the files in the cloud are  homogeneous with same size and coding parameters,
 which also implies that all chunks are homogeneous with same size.

Data retrieval (read) is provided through a dispatcher with multiple threads,
 as shown in Fig. \ref{fig:arch}.  When read requests arrive, they are first enqueued. The dispatcher  then determines how to allocate   threads  for each request. For instance, in Fig.~\ref{fig:arch}, there are three threads (rectangles) which are scheduled to download chunks
from the cloud, while the other three threads (circles) are idle. Due to the service chosen by the  user, the dispatcher can simultaneously keep at most $L$ threads active in any time instant.
We assume that the read requests arrive at the dispatcher  with rate $\lambda$. 
Then, when a file is to be read from the cloud, the dispatcher schedules $m$ ($k\leq m\leq n$) read tasks for distinct chunks of the file by activating a total of $m$ threads (not necessarily simultaneously). Due to the use of FEC codes, the first $k$ successful responses from the storage cloud will then be sufficient for completing the read operation. Hence, we assume that once $k$ chunks are successfully downloaded, the other read tasks in progress can be terminated immediately.

For simplicity, we first assume  $n \geq L+k-1$. This assumption means that there are always enough distinct data chunks in the clouds, since it guarantees that there are still at least $L$ distinct chunks in the cloud
for the $L$ threads to share given that $k-1$ chunks have been successfully downloaded.
 In this paper, we investigate the file retrieve latency under different thread allocation schemes, i.e., how should the dispatcher schedule the threads for serving the read requests.
 %


%
%
%
%


\subsection{Problem Formulation}\label{notations}
We consider a time period $[0, T]$ and let $\mathcal{N}=\{1,2,3,$ $\cdots,i,\cdots
N_T\}$ denote the set of read request arrivals during this period, where  $i$ denotes the $i$-th arrival
and $N_T$ is the total
number of request arrivals during the period.
We denote $n_i$ as the  number of threads that the dispatcher allocates for the request $i$. Clearly, $k\leq n_i\leq n$.  We denote the downloading time of the $j$th thread for request $i$ by $X_{i,j}$, $j=1, \cdots,n_{i}$.
We let $T_{A}^{i}$ denote the arrival time of request $i$, and  $T_{S}^{i,j}$ denote the starting time of the $j$th thread of  request $i$. Without
loss of generality, we assume $T_{S}^{i,1}\leq T_{S}^{i,2}\leq\cdots\leq
T_{S}^{i,n_{i}}$. We denote the  finishing time of the thread $j$ of the request $i$ as $T_F^{i,j}$, which is given by $T_F^{i,j}=T_{S}^{i,j}+X_{i,j}$. Note that the threads are  ordered by their starting time but not the completion time. So it is possible that $T_F^{i,j} > T_F^{i,l}$ even if $j<l$.
The departure time  of request $i$, denoted as $T_{F}^{i}$, is then given by the time when $k$ of its threads have finished. Let $T_F^{i,1:n_{i}}\le T_F^{i,2:n_{i}}\le \cdots\le T_F^{i,n_{i}:n_{i}}$ be the sorted permutation of the finish times of request $i$'s threads. Thus, we have $T_F^i = T_F^{i,k:n_{i}}$.

The {system} delay for request $i$, denoted as $D_i$ is therefore given by
\begin{align}
\label{eq_1}
D_i=T_F^i-T_A^i.
\end{align}

Hence, the total expected delay under a thread allocation policy $\pi$ is given by
\begin{align}
\label{eq_2}
\mathbb{E}_{\pi}[D]=\lim_{T\rightarrow\infty}\frac{1}{N_T}\sum_{i=1}^{N_T} \mathbb{E}_{\pi} [D_i],
\end{align}
where the expectation is with respect to the distributions of   arrival
process and departure process.

In the following, we will develop a new queueing model for analyzing such cloud systems that use FEC codes for data storage. Before we proceed, we first have the following definitions.

\begin{definition}\label{def_1}
The queue is said to be stable under a policy $\pi$ if
\begin{align}
\label{eq_stability}
\mathbb{E}_{\pi}[D] < \infty.
\end{align}
\end{definition}

\begin{definition}\label{def_2}  \textit{Capacity Region $\Lambda$:} The set  of request
arrival rates under which the queue can be stabilized by some possible scheme.
\end{definition}

\begin{definition}\label{def_3}  \textit{Throughput-optimal scheme:} A scheme
is said to be throughput optimal  if the queue with arbitrary arrival rate $\lambda$ can be stabilized under this scheme whenever there exists an $\epsilon>0$ such that $\lambda+\epsilon\in\Lambda$.
\end{definition}

\begin{definition}\label{def_4}  \textit{Delay-optimal scheme:} A scheme
$\pi$
is said to be delay optimal  if  it yields the smallest $\mathbb{E}_{\pi}[D]$
among all schemes for any  request arrival rate $\lambda$ such that $\lambda+\epsilon\in\Lambda$ for some $\epsilon>0$.
\end{definition}

%

\subsection{Thread Allocation Schemes}\label{anewqueueingmodel}

In this section, we first present several thread allocation schemes that can be adopted by the dispatcher and then we motivate our problem using a
simple example. The model can be depicted as a multiple-server queue  with arrivals of data retrieving requests and a server corresponding to a thread. However, this is not a typical queueing model because a server can terminate its operation, depending on when other servers complete their service (due to the redundancy that is spread across the threads).
We here consider two  schemes: the \textsf{greedy} scheme and the \textsf{sharing} scheme.
\begin{enumerate}
\item \textsf{The greedy scheme}: All  $L$ threads are always allocated to the HoL (head-of-line) request simultaneously until it departs. During the serving process, if any thread out of $L$ finishes downloading its assigned chunk, it  immediately starts to download another distinct chunk belonging to the \textit{same} file. Then, at some point, one thread
finishes  downloading so that
 the cumulative number of successfully downloaded chunks reaches $k$, all the other $L-1$ threads in progress  get terminated immediately. The read request is considered  complete and departs the queue. After that, all the threads will  be allocated to serve the next HoL read request if the queue is not empty. Otherwise, all threads remain idle.


\item \textsf{The sharing scheme}: The dispatcher always allocates exactly $k$ threads to each request (not necessarily simultaneously). The requests
are served in a first-come-first-serve manner.  The dispatcher allocates as many available threads as possible to a request until $k$ threads have been assigned to it in total. When the number
of allocated threads for a request reaches $k$, the dispatcher will allocate
the available threads to the next request.
\end{enumerate}

We can see that the greedy scheme and the sharing scheme are  two extremes.
The greedy scheme allocates the maximum possible resources to each individual request, but it can  serve only one request at any time. On the other hand,   the sharing scheme is the most conservative for each individual request,
  yet  serves the maximum possible parallel requests. Another observation  is that the total number of chunks required for the greedy scheme
is $L+k-1$, which consists of $k$ successfully downloaded and $L-1$ terminated. This corresponds to the assumption $n\geq L+k-1$.

To better motivate our problem, we consider a simple example  as shown in Fig.  \ref{fig_example}.  Two threads are used to read two files A and B, which use coding parameters $(2,1)$, i.e.,    each  file simply has two duplications. We compare the delay of the two files using the  two schemes under two different downloading time distributions.

\begin{figure}[!t]
\centering
\includegraphics[width=0.8\columnwidth]{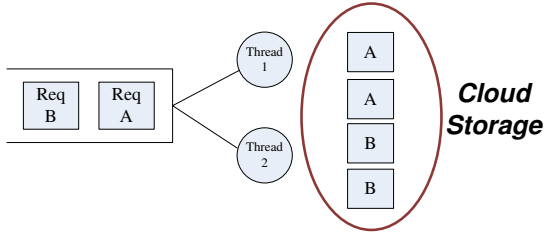}
\caption{An example of a two-thread queue}
\label{fig_example}
\end{figure}

\textbf{Case 1) }If the downloading time is constant for any chunk, we can see that the sharing scheme always outperforms the greedy scheme in
terms of the delay performance since the parallel downloads do not bring any benefit when the delay is a fixed value.

\textbf{Case 2) }Consider a distribution of downloading time, which is 0 with probability
2/3 and 3000\textit{ms} with probability 1/3. Therefore, we can see that sharing scheme
has an expected delay of 1000\textit{ms} for each request, since it allocates one request
to each thread. On the other hand, since the greedy scheme allocates both threads
to a request one by one, we can easily check that the expected delay for the first request is 1000/3 \textit{ms}
and the expected delay for the second one is 2000/3 \textit{ms}.  Both experience smaller delay than those under the sharing scheme.

The detailed calculation is as follows:
\begin{enumerate}
\item For the first request, it has a delay of 3000\textit{ms} if and
only if both threads suffer a delay of 3000\textit{ms}. The probability of
this scenario is  1/9. With a probability of 8/9, the delay is 0. Therefore,
the expected delay for the first request is $\frac{3000}{9}+\frac{0\times8}{9}=1000/3$
\textit{ms}.
\item For the second request, the expected delay is  the sum of the expected
waiting delay  in the queue and its expected service delay.
Notice that its expected waiting delay in the queue is exactly the expected
delay of the first request, i.e., 1000/3 \textit{ms}. In addition, its service delay can be calculated in the same way as the first request, that is, 1000/3 \textit{ms}. Thus,
the expected delay for the second request is 2000/3 \textit{ms}.
\end{enumerate}

 Hence, we can observe that the delay-optimal scheme heavily depends on the distribution of the downloading time.
This motivates us to ask a key question: \textit{what is the delay optimal thread allocation
scheme when the downloading time exhibits a  distribution like
the one in Fig. \ref{fig_1_2}}. {For the tractability of theoretical analysis, we assume that $X_{i,j}$ is \textit{i.i.d.}~exponentially distributed random variable with mean $\mu$ in the following sections.}


%

\section{ANALYSIS of CASE $\lowercase{K}=1$}\label{kequalone}
To facilitate the understanding of our results, we start by analyzing the case when $k=1$. In this case,  it simply means using data duplication to store the files.
To present our analysis, we first define the following:
\begin{definition}\label{def_5}
\textit{Work-conserving schemes:}  A scheme is said to be work-conserving if no thread is idle whenever there are  requests waiting in the queue.
\end{definition}

We emphasize that \textbf{a throughput-optimal scheme
must be work-conserving}. In addition, \textbf{a delay-optimal scheme must be work-conserving}. The reason is that the delay and throughput performance of any non-work-conserving scheme can be improved by assigning the idle threads to download some additional chunks.



Before we state our results, a key property of work-conserving scheduling schemes for both $k=1$ and $k>1$ cases is described as follows:

\begin{lemma}\label{lem_1}
If the downloading time of each individual thread is \textit{i.i.d.} exponentially distributed with rate $\mu$, and all $L$ threads are active, then the service time for the threads to download one more coded chunk is exponentially distributed with rate $L\mu$ and is \textit{i.i.d.} across coded chunks.
\end{lemma}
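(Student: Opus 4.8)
The plan is to reduce the claim to two elementary facts about the exponential distribution: memorylessness, and the fact that the minimum of independent exponentials is exponential with the summed rate. Everything else is bookkeeping with the Markov structure induced by the i.i.d. assumption.

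First I would fix notation. Let $\tau_0$ be the (random) time at which the configuration ``all $L$ threads active'' is entered, and let $\tau_0<\tau_1<\tau_2<\cdots$ be the successive epochs at which a coded chunk completes while this configuration persists. For each active slot $j\in\{1,\dots,L\}$ let $R_j(t)$ be the residual downloading time at time $t$ of the thread currently occupying slot $j$. The key structural observation is that at $t=\tau_0$ the variables $R_1(\tau_0),\dots,R_L(\tau_0)$ are mutually independent, each $\mathrm{Exp}(\mu)$, and independent of the history $\mathcal F_{\tau_0}$ of the system: for a thread that has just been assigned a fresh chunk this holds by the i.i.d.\ assumption on the $X_{i,j}$; for a thread whose chunk was started earlier it holds by the memoryless property $\prob(X>s+t\mid X>s)=\prob(X>t)$, since the already-elapsed downloading time carries no information about the remainder, and the scheduler's decisions up to $\tau_0$ are $\mathcal F_{\tau_0}$-measurable.

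Given this, the time to download one more chunk starting from $\tau_0$ is $\tau_1-\tau_0=\min_{1\le j\le L}R_j(\tau_0)$, the minimum of $L$ independent $\mathrm{Exp}(\mu)$ random variables, hence $\mathrm{Exp}(L\mu)$. For the i.i.d.-across-chunks claim I would argue by induction: at time $\tau_1$ exactly one thread (the minimizer) has finished; by work-conservation it immediately begins a fresh distinct chunk with a brand-new downloading time drawn i.i.d.\ $\mathrm{Exp}(\mu)$, while the other $L-1$ threads, conditioned on $\{\tau_1-\tau_0=s\}$ and on the identity of the minimizer, again have residual lives that are independent $\mathrm{Exp}(\mu)$ by memorylessness (loss of memory after an independent elapsed time). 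Thus the configuration at $\tau_1$ is a probabilistic replica of the one at $\tau_0$, independent of $\tau_1-\tau_0$; iterating, $\tau_{m+1}-\tau_m$ are i.i.d.\ $\mathrm{Exp}(L\mu)$.

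The only step that needs care — and the one I expect to be the main obstacle — is making the conditioning rigorous: one must verify that, conditionally on $\mathcal F_{\tau_m}$, the vector $(R_1(\tau_m),\dots,R_L(\tau_m))$ is a vector of independent $\mathrm{Exp}(\mu)$'s, and moreover that the identity of the completing thread at each $\tau_m$ is independent of the realized inter-completion time. This follows from the strong Markov property applied at the stopping time $\tau_m$ combined with the memorylessness identity above; the independence of ``which thread finishes'' from ``when it finishes'' is exactly the symmetry that lets the induction close without residual dependence. Once that is in place, the distribution of $\tau_{m+1}-\tau_m$ is the direct density computation for $\min_j R_j$, and the lemma follows.
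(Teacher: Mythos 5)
Your proposal is correct and follows essentially the same route as the paper's proof: residual downloading times are \textit{i.i.d.} $\mathrm{Exp}(\mu)$ by memorylessness, their minimum is $\mathrm{Exp}(L\mu)$, and the \textit{i.i.d.}-across-chunks claim again follows from memorylessness. Your version is simply a more careful, induction-based rendering (via the strong Markov property) of the step the paper dispatches in one sentence.
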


\begin{proof}
Suppose that the system starts to download the next coded chunk at time $t$, either because a new request arrives or because a coded chunk is downloaded at $t$. The downloading operation of Thread $l$ may start before time $t$. Let $R_l$ be the resident downloading time of Thread $l$ after time $t$. Since all $L$ threads are active, the service time for the threads to download one more coded chunk is given by
\begin{equation}
R =\min_{l\in\{1,\cdots,L\}}R_{l}.
\end{equation}
According to the memoryless property of exponential distribution, the $R_{l}$'s are also exponentially distributed with rate $\mu$ and are \emph{i.i.d.} across threads. Therefore, $R =\min_{l\in\{1,\cdots,L\}}R_{l}$
is exponentially distributed with rate $L\mu$. By the memoryless property of exponential distribution, the download durations of the retrieved coded chunks are \textit{i.i.d.} Therefore, the asserted statement is proved.
\end{proof}

Now, for $k=1$ case,  we have the following result.

\begin{theorem} \label{theorem_1} When $k=1$ and $n \geq L$, given that the downloading time of each individual
thread is \textit{i.i.d.} exponentially distributed, any work-conserving scheme is throughput optimal and also delay optimal {among all on-line scheduling schemes} for any arrival process.
\end{theorem}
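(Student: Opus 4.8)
The plan is to reduce everything, via the memoryless property, to a pathwise comparison of queue lengths. The key structural fact when $k=1$ is that a request departs the instant \emph{any one} of its threads completes a chunk; hence under \emph{any} scheme every chunk-completion event triggers exactly one departure — the completed chunk necessarily belongs to a request still in the queue, since the moment a request departs its remaining threads are terminated and re-assigned. Combined with Lemma~\ref{lem_1}, this says that under a work-conserving scheme, whenever the queue is nonempty all $L$ threads are active and the next departure occurs after an $\mathrm{Exp}(L\mu)$ time; the hypothesis $n\geq L$ is exactly what guarantees that $L$ distinct chunks are always available so that such an allocation is feasible.

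Next I would put all schemes on a common probability space. Let $A(\cdot)$ be the (arbitrary) arrival process, and let $\{(t_j,m_j)\}_{j\geq 1}$ be an independent marked Poisson process of rate $L\mu$ with marks $m_j$ i.i.d.\ uniform on $\{1,\dots,L\}$. Interpret $t_j$ as a potential completion epoch for thread $m_j$: under a given online scheme $\pi$, if thread $m_j$ is active at $t_j^-$ its current chunk completes at $t_j$ (and by $k=1$ one request departs), otherwise nothing happens. By Poisson thinning and memorylessness (Lemma~\ref{lem_1}), this construction endows every thread with the correct $\mathrm{Exp}(\mu)$ download law irrespective of the scheme's history-dependent decisions, so it is a legitimate simultaneous coupling of all online schemes.

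Then, fixing any work-conserving scheme $\pi^\star$ and any online scheme $\pi$ (both started empty), I claim $Q^{\pi}(t)\geq Q^{\pi^\star}(t)$ for all $t$, proved by induction over the epochs of $A$ and of $\{t_j\}$. Arrivals increment both queues equally. At an epoch $t_j$: if $Q^{\pi^\star}(t_j^-)=0$ there is nothing to prove; if $Q^{\pi^\star}(t_j^-)\geq 1$ then $\pi^\star$ has thread $m_j$ active, so $Q^{\pi^\star}$ drops by exactly one, while $Q^{\pi}$ drops by at most one (only thread $m_j$ can complete), and $Q^{\pi}(t_j^-)\geq Q^{\pi^\star}(t_j^-)$ is preserved in every case — in particular when $Q^{\pi}(t_j^-)=Q^{\pi^\star}(t_j^-)$, where $\pi$ either matches the departure or, if it is idling thread $m_j$, ends up strictly above $\pi^\star$. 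Pathwise queue-length dominance yields dominance of the cumulative departure count at every time, hence $\sum_{i\leq N_T}T_F^{i}$, and therefore $\sum_{i\leq N_T}D_i$, is pathwise minimized by $\pi^\star$; taking expectations and letting $T\to\infty$ in \eqref{eq_2} gives $\mathbb{E}_{\pi^\star}[D]\leq\mathbb{E}_{\pi}[D]$, i.e.\ delay optimality. Throughput optimality is then immediate, since $\mathbb{E}_{\pi}[D]<\infty$ forces $\mathbb{E}_{\pi^\star}[D]<\infty$; and running the same induction with $\pi$ also work-conserving gives $Q^{\pi}(t)=Q^{\pi^\star}(t)$, so all work-conserving schemes are in fact equivalent.

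The main obstacle is the second step: making the coupled service construction rigorous, i.e.\ verifying that re-using the single marked Poisson process across schemes that keep \emph{different} sets of threads active at $t_j^-$ still reproduces i.i.d.\ exponential download times for each scheme separately. This is precisely where the memoryless property (Lemma~\ref{lem_1}) is indispensable, and it is the only place the exponential assumption is used; the $k=1$ "one completion $=$ one departure" identity, the feasibility of work-conservation under $n\geq L$, and the inductive pathwise comparison are all combinatorial and distribution-free. A secondary point, routine but worth stating carefully because requests need not depart in FIFO order, is the passage from pathwise queue-length dominance to the limiting time-average delay in \eqref{eq_2}, which is a standard Little's-law/counting argument.
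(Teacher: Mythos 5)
Your proposal is correct, but it proves the theorem by a genuinely different route than the paper. The paper's argument is distributional: since $k=1$ identifies chunk completions with request departures, Lemma~\ref{lem_1} implies that every work-conserving scheme produces the \emph{same} departure point process in distribution (i.i.d.\ $\mathrm{Exp}(L\mu)$ inter-departure times whenever the queue is nonempty), so all work-conserving schemes achieve identical expected delay; non-work-conserving schemes are then dismissed by the informal observation that idling threads can always be put to work to reduce delay, and throughput optimality is deduced from delay optimality by a short contradiction argument. You instead build a single marked Poisson clock of rate $L\mu$ (uniformization), couple \emph{all} online schemes on one sample path, and prove pathwise queue-length dominance $Q^{\pi}(t)\geq Q^{\pi^\star}(t)$ of an arbitrary online scheme over a work-conserving one, converting this to delay dominance via the area-under-the-queue identity. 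The trade-off: the paper's argument is shorter and yields the stronger conclusion that all work-conserving schemes are exactly equivalent in distribution, but its treatment of non-work-conserving competitors is only a sketch; your coupling handles arbitrary (including idling) online policies in one rigorous stroke, which is precisely the part the paper leaves informal, and it correctly isolates memorylessness as the only place the exponential assumption enters. Both arguments rely on the same two structural facts — Lemma~\ref{lem_1} and the $k=1$ ``one completion equals one departure'' identity — so your proof can be read as a rigorous sample-path strengthening of the paper's distributional one.
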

\begin{proof} First, we  show that a delay optimal scheme must also be  throughput optimal. Suppose this is not true and denote a delay optimal scheme in consideration as $\pi^*$, which is not throughput optimal. Then, according to Definitions
\ref{def_1} and \ref{def_3},  there exists an arrival rate $\lambda$ such that $\lambda+\epsilon\in\Lambda$, but under this rate, $\pi^*$ results in $\mathbb{E}_{\pi^*}[D]=\infty$. However, since $\lambda+\epsilon\in\Lambda$,  there exists a policy $\pi$ that has $\mathbb{E}_{\pi}[D]<\infty$, and hence $\mathbb{E}_{\pi}[D]<\mathbb{E}_{\pi^*}[D]$. This  contradicts the delay optimality  of $\pi^*$.


Now we prove the delay optimality part. We only need to consider the work-conserving schemes, because a non-work-conserving scheme cannot be delay optimal. In particular, we can always transform a non-work-conserving scheme to a work-conserving scheme by utilizing the idle threads to download some additional chunks, which leads to a lower delay.

Since $k=1$, each downloaded coded chunk leads to a request departure. 
According to Lemma \ref{lem_1}, the service durations of the downloaded coded chunks are \textit{i.i.d.} exponentially distributed under any work-conserving scheme. Therefore, the service durations of the requests are also \textit{i.i.d.} exponentially distributed. Different work-conserving schemes only affect the service order of different requests, but have no influence on the average delay. Therefore, any work-conserving scheme is delay optimal. By this, the asserted statement is proved.

%
\end{proof}

It is interesting to see that although some work-conserving schemes, such as the \textsf{greedy} scheme, appears to ``waste'' some system resources because some threads have useless (unfinished) downloads due to redundant assignment, it is still both throughput and delay optimal. The reason is because the chunk downloading time is assumed to be exponentially distributed, which satisfies the memoryless property.


%
%

\section{ANALYSIS of CASE $\lowercase{K}>1$}\label{kgreaterone}

In this section, we extend our results to the case when $k>1$. We first have the following definitions.

\begin{definition}\label{def_6} \textit{Effective chunks:} chunks that are downloaded by the first $k$ completed threads for any particular file.
\end{definition}

\begin{definition}\label{def_7}  \textit{Thread terminations:}  chunk download attempts that are terminated due to the completion of the read request, i.e., there are already $k$ effective chunks downloaded.
\end{definition}

%
%
%

Next, we  have the following theorem regarding the throughput optimality of the work-conserving scheduling policies.

\begin{theorem} \label{theorem_2} When $k>1$ and $n \geq L+k-1$, given that the downloading time of each individual
thread is \textit{i.i.d.}   exponentially distributed, any work-conserving scheme  is throughput optimal {among all on-line scheduling schemes for any arrival process}.
\end{theorem}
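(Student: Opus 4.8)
The plan is to identify the capacity region $\Lambda$ exactly and then show that every work-conserving scheme stabilizes the queue for every arrival rate strictly inside it. The key structural fact is that, because of the $(n,k)$ FEC code, \emph{any} scheme --- on-line, off-line, or clairvoyant --- must successfully complete the download of at least $k$ distinct coded chunks for a request before that request can depart, while each of the $L$ threads, being a single server with $\mathrm{Exp}(\mu)$ chunk-download times, completes chunks at a long-run rate of at most $\mu$. Summing over the $L$ threads, the request departure rate of any scheme is at most $L\mu/k$; if $\lambda > L\mu/k$ the backlog drifts to infinity and $\mathbb{E}_\pi[D]=\infty$, so this gives the outer bound $\Lambda \subseteq [0, L\mu/k]$.

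For the complementary part, I would fix an arbitrary work-conserving scheme $\pi$ and an arrival rate $\lambda$ with $\lambda+\epsilon \in \Lambda$, so that $\lambda < L\mu/k$. First I would observe that whenever the system is non-empty a work-conserving scheme must have all $L$ threads active: the hypothesis $n \ge L+k-1$ guarantees that even after $k-1$ effective chunks of the HoL request have been obtained there still remain at least $L$ distinct coded chunks to assign, so no thread is ever forced to idle while requests wait. By Lemma \ref{lem_1}, during any busy period the epochs at which \emph{some} thread finishes a coded chunk then have i.i.d.\ $\mathrm{Exp}(L\mu)$ spacings, i.e.\ chunk completions form a rate-$L\mu$ Poisson flow. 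Next I would note that, for any non-wasteful scheme, each request departs after exactly $k$ of its chunk downloads complete --- the downloads still in progress are terminated and immediately reassigned, contributing nothing --- so the number of chunk completions still needed to empty the current backlog equals $kQ(t)$ minus the (boundedly many) effective chunks already held by in-progress requests, where $Q(t)$ denotes the number of requests in the system.

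The stability conclusion is then a standard negative-drift argument. Because chunk completions are memoryless, the state $\big(Q(t),\ \text{the vector of effective-chunk counts of the requests currently in progress}\big)$ is a Markov chain on $\mathbb{Z}_+$ times a finite set. Taking as Lyapunov function $V(t)$ the number of further chunk completions needed to clear the current backlog (so $V(t)=kQ(t)-O(1)$), the drift of $V$ while the system is backlogged is $k\lambda-L\mu<0$, while all upward jumps as well as the finite-state coordinate are uniformly bounded; a Foster--Lyapunov argument then yields positive recurrence and $\mathbb{E}_\pi[D]<\infty$. Hence $\pi$ is throughput optimal, and together with the outer bound this also pins down $\Lambda=[0,L\mu/k)$.

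The step I expect to be the main obstacle is making the drift computation hold \emph{uniformly over all work-conserving schemes and all arrival processes}: the in-progress configuration is scheme-dependent, the completion-to-request assignment need not be FIFO, and one must rule out degenerate behaviours (two completions at the same instant, downloading for an already-departed request, or re-downloading a chunk) either as probability-zero events or by appeal to the definition of an admissible scheme. The cleanest way around this bookkeeping is probably to avoid tracking individual requests altogether and instead couple the true departure process with that of a single work-conserving server whose per-request service requirement is $k$ i.i.d.\ $\mathrm{Exp}(L\mu)$ stages --- an Erlang-$k$ server with mean service time $k/(L\mu)$ --- whose departure process stochastically dominates the real one and which is classically stable whenever $\lambda<L\mu/k$.
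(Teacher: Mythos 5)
Your proposal is correct and follows essentially the same route as the paper: both arguments rest on Lemma \ref{lem_1} (all $L$ busy threads yield i.i.d.\ $\mathrm{Exp}(L\mu)$ chunk completions) plus the fact that each request consumes exactly $k$ effective chunks, giving a maximum departure rate of $L\mu/k$ that any work-conserving scheme attains. The paper simply asserts ``stable iff arrival rate is below departure rate'' at that point, so your Foster--Lyapunov / Erlang-$k$ coupling material is added rigor on top of the paper's argument rather than a different approach (and your own fix is needed for the small slip that the set of partially served requests need not be finite under an arbitrary work-conserving scheme, though $V \ge Q$ still closes the argument).
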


\begin{proof} We only need to consider the work-conserving schemes, because a non-work-conserving scheme cannot be throughput optimal.
According to Lemma \ref{lem_1}, we know that the service durations of effective chunks under any work-conserving scheme have the same distribution, i.e., they are  \textit{i.i.d.} exponentially distributed with rate $L\mu$. Since each request requires to download
exactly $k$ effective chunks, the average request departure rate of any work-conserving scheme is $L\mu/k$. On the other hand, the queue is stable if and only if the average request arrival rate is less than the average request departure rate. From the previous discussion,
any work-conserving scheme can provide the maximum request departure rate $L\mu/k$. Therefore, any work-conserving scheme  is throughput optimal.
\end{proof}

We now have the following theorem regarding the delay performance of the \textsf{greedy}  scheme.

\begin{theorem}  \label {theorem_3}
When $k>1$ and $n \geq L+k-1$, given that the downloading time of each individual
thread is \textit{i.i.d.} exponentially distributed, the \textsf{greedy} scheme is delay optimal {among all on-line scheduling schemes} for any arrival process.
\end{theorem}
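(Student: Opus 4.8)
## Proof Proposal for Theorem 3

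The plan is to establish delay optimality of the \textsf{greedy} scheme via a stochastic coupling / sample-path argument, showing that for any on-line scheduling policy $\pi$, the departure process under \textsf{greedy} stochastically dominates (in the sense of earlier departures) the departure process under $\pi$, when both are driven by the same arrival sequence and the same underlying randomness. Because of Lemma~\ref{lem_1}, the key simplification is that as long as all $L$ threads are busy, the sequence of ``chunk-completion epochs'' is a rate-$L\mu$ Poisson process regardless of which chunks (of which files) the threads are working on, and thanks to the memoryless property we can couple these epochs across the two systems. I would first set up a common probability space: let $0 < \tau_1 < \tau_2 < \cdots$ be the points of a single rate-$L\mu$ Poisson process; under \emph{any} work-conserving policy, whenever the queue is nonempty, the $m$-th chunk completion after the system becomes busy occurs at a shifted $\tau_m$. (When the queue is empty the systems idle identically since the arrival process is shared.) The non-work-conserving case is dismissed exactly as in Theorems~\ref{theorem_1}--\ref{theorem_2}: idle threads can always be put to productive use, weakly reducing delay, so it suffices to compare \textsf{greedy} against an arbitrary work-conserving on-line policy $\pi$.

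The heart of the argument is a potential/invariant comparison. Define, for each system at each time $t$, the vector of ``remaining effective-chunk requirements'' of the requests currently in the system, listed in arrival order: request $i$ still needs $r_i(t) \in \{0,1,\dots,k\}$ more effective chunks, where $r_i = k$ for requests not yet started. The \textsf{greedy} scheme always drives down $r$ of the head-of-line request, so its requirement vector is lexicographically ``as front-loaded as possible.'' I would prove by induction on the completion epochs $\tau_m$ that, under the coupling, the partial sums $S_j^{\text{greedy}}(t) := \sum_{i=1}^{j} \big(k - r_i(t)\big)$ (total effective-chunk progress among the first $j$ requests present) satisfy $S_j^{\text{greedy}}(t) \ge S_j^{\pi}(t)$ for every $j$ and every $t$, with the same set of requests present in both systems at every arrival epoch. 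At a completion epoch both systems advance their total progress by exactly one unit (work-conserving, queue nonempty); \textsf{greedy} applies that unit to the lowest-indexed unfinished request, which is the operation that preserves the dominance of all the prefix sums $S_j$. A request $i$ has departed by time $t$ iff $S_i(t) - S_{i-1}(t) = k$, i.e. iff $r_i(t)=0$; the prefix-sum dominance, together with the fact that an individual $r_i$ can never exceed $k$, yields that the departure time of the $i$-th request under \textsf{greedy} is no later than under $\pi$, on every coupled sample path. Summing over requests and taking expectations gives $\mathbb{E}_{\textsf{greedy}}[D] \le \mathbb{E}_\pi[D]$.

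There is a subtlety I must handle carefully: the assumption $n \ge L+k-1$ guarantees that \textsf{greedy} can always keep all $L$ threads on distinct chunks of the single HoL file, so \textsf{greedy} is genuinely work-conserving and the rate-$L\mu$ coupling applies to it without interruption; for a general $\pi$ that spreads threads over several files, work-conservation likewise keeps all $L$ threads busy whenever $\ge 1$ request is present (since $n \ge L+k-1$ ensures enough distinct chunks even in the worst case), so the same Poisson clock governs both. The genuinely delicate point — and what I expect to be the main obstacle — is making the coupling rigorous across arrival epochs: when a new request arrives, the set of requests present must be shown to coincide in both systems, which requires that no request has departed in one system but not the other up to that time. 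This follows from the prefix-sum dominance applied inductively, but one has to be careful that \textsf{greedy}'s earlier departures do not cause it to ``run out'' of requests and idle while $\pi$ is still busy — which cannot happen precisely because if \textsf{greedy}'s queue is empty then all requests that ever arrived have departed under \textsf{greedy}, and by dominance they have also departed (weakly earlier, hence) under $\pi$, so $\pi$'s queue is empty too, and both idle together until the next arrival. Formalizing this mutual-emptiness synchronization, together with the induction step at completion epochs showing the prefix-sum order is preserved by the greedy allocation, is the technical core; the rest is bookkeeping.
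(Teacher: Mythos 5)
Your overall strategy---reduce to work-conserving policies, use the memoryless property (Lemma \ref{lem_1}) to make the chunk-completion epochs scheme-independent, then argue combinatorially that greedy converts chunk completions into request departures as fast as possible---is the same as the paper's, which fixes the effective-chunk departure instants $t_1<t_2<\cdots$ (identical in distribution for every work-conserving scheme) and compares schemes on each sample path. The gap is in your final step: you claim that prefix-sum dominance yields that ``the departure time of the $i$-th request under greedy is no later than under $\pi$ on every coupled sample path.'' With $i$ indexing arrival order, as in your setup, this is false. Take $k=2$ and two requests present from the start: a policy $\pi$ may assign completions $t_1,t_3$ to request 2 and $t_2,t_4$ to request 1, so request 2 departs at $t_3$ under $\pi$ but at $t_4$ under greedy. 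Your invariant $S_j^{\text{greedy}}\ge S_j^{\pi}$ does hold here, but it does not imply $r_i^{\text{greedy}}(t)\le r_i^{\pi}(t)$ coordinatewise; a later-arriving request can finish strictly earlier under $\pi$ than under greedy. The paper itself points this out (request 3 in Fig.~\ref{fig_ser}).

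What your coupling and invariant \emph{do} give---and what suffices---is the weaker, correct statement that greedy maximizes the \emph{number} of departed requests by every time $t$: total progress is the same in both systems (one chunk per completion epoch), the number of indices with $r_i=0$ is at most $\lfloor S/k\rfloor$ when total progress is $S$, and greedy attains this bound. Equivalently, the $i$-th \emph{smallest} departure time under any work-conserving $\pi$ is at least $t_{ik}$, which is the $i$-th departure time under greedy. Since the sum of delays depends only on the multiset of departure times, $\sum_i t_{ik}\le\sum_i t_{a_i}$ follows; this is exactly the paper's inequality \eqref{eq_4}. Replace the per-arrival-index dominance claim with this sorted/counting comparison and your argument closes, becoming essentially the paper's proof with the distributional identity of the chunk-departure process replaced by an explicit Poisson-clock coupling (interchangeable here, since only the expected sum of delays is compared). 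Your concerns about synchronizing arrivals and mutual emptiness are then handled automatically, since the comparison is made per fixed sample path of the common chunk-completion process.
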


\begin{proof}
First, notice that a delay-optimal scheme must be work-conserving.
Otherwise, it is easy to reduce the delay by simply allocating the idle threads to download more chunks.

Let $s_i$ denote the arrival instants of the $i$-th arrival effective chunk and $t_i$ denote the departure instants of the $i$-th departed effective chunk. Obviously, we have $s_i<s_{i+1}$, $t_i<t_{i+1}$, and $s_i<t_{i}$.
Fix the arrival process $\omega_A=\{s_1, s_2, \cdots\}$ of the effective chunks, we will show that the distribution (probability density function) of the departure process $\omega_D=\{t_1, t_2, \cdots\}$ remains the same for any work-conserving scheme.

According to Lemma \ref{lem_1}, the service time of an effective chunk has the same distribution under any work-conserving scheme. Let $S_i$ denote the service time of the $i$-th departed effective chunk.
Given the arrival instant $s_1$ of the first effective chunk, the departure instant $t_1$ is given by $t_1 = s_1+S_1$. Moreover, since the distribution of $S_i$ remains the same under any work-conserving scheme, the distribution of $t_1$ also remains the same under any work-conserving scheme.
Now consider the departure instant $t_i$ of the $i$-th departed effective chunk. Since the system is work-conserving, $t_i$ is given by $t_i= \max\{s_i,t_{i-1}\} + S_i$. Since the distributions of $\{t_1,t_2, \cdots,t_{i-1}\}$ and $S_i$ remain unchanged under any work-conserving scheme, one can show that the distribution of $\{t_1,t_2, \cdots, t_i\}$ also remains unchanged under any work-conserving scheme. By induction, we attained that the distribution of the departure process $\omega_D=\{t_1, t_2, \cdots\}$ remains the same under any work-conserving scheme.

Next, we will show that the greedy scheme has the smallest delay  among all work-conserving schemes. According to Eqns. \eqref{eq_1} and \eqref{eq_2},  the expected delay is given by:
\begin{align}
\label{eq_3}
&\mathbb{E}_{\pi}[D]\nonumber\\
=&\lim_{T\rightarrow\infty}\frac{1}{N_T}\sum_{i=1}^{N_T} \mathbb{E}_{\pi} [T_F^i-T_A^i]\nonumber\\
=& \lim_{T\rightarrow\infty} \frac{1}{N_T}\sum_{i=1}^{N_T}\int_{\omega_{A}} \int_{\omega_{D}} \!\!(T_{F:\pi}^i\!-\!T_A^i)f(\omega_A)f(\omega_D|\omega_A)d\omega_Ad\omega_D,\nonumber\\
=& \lim_{T\rightarrow\infty} \int_{\omega_{A}} \int_{\omega_{D}}\! \frac{1}{N_T}\!\sum_{i=1}^{N_T}(T_{F:\pi}^i\!-\!T_A^i)f(\omega_A)f(\omega_D|\omega_A)d\omega_Ad\omega_D,
\end{align}
where the expectation is taken over the distribution of the arrival process
of effective chunks $\omega_A$ and the corresponding distribution of the departure process of the effective chunks $\omega_D$. $T_{F:\pi}^i$ denotes the departure time
of the $i-$th request under scheme $\pi$. Notice that the distributions of $\omega_A$ and $\omega_D$ are always the same for any work-conserving scheme.

For the greedy scheme $\pi_{\textsf{greedy}}$, since it serves the requests one by one and each request requires
exactly $k$ effective chunks, the $i$-th request
departs when the $ik$-th effective chunk departs, i.e.,
\begin{align}\label{eq_31}
T_{F:\pi_{\textsf{greedy}}}^i=t_{ik}.
\end{align}

For an alternative work-conserving
scheme $\pi_{\textsf{alt}}$, suppose that the $i$-th arrival request departs when the $a_i$-th effective chunk departs, i.e.,
\begin{align}\label{eq_32}
T_{F:\pi_{\textsf{alt}}}^i=t_{a_{i}}.
\end{align}
The sequence $\{t_{a_{1}}, t_{a_{2}},\cdots,t_{a_{N_T}}\}$ may not be in an increasing order. After sorting, suppose that this sequence becomes $\{t_{b_{1}}, t_{b_{2}},\cdots,t_{b_{N_T}}\}$ such that $t_{b_{i}}<t_{b_{i+1}}$ and $t_{b_{i}}\in\{t_{a_{1}}, t_{a_{2}},\cdots,t_{a_{N_T}}\}$. Therefore, $i$ requests have departed from the system by time $t_{b_{i}}$. Since each request contains $k$ effective chunks, the systems must have downloaded at least $ik$ effective chunks by $t_{b_{i}}$, which tells us that
\begin{align}\label{eq_33}
t_{b_{i}}\geq t_{ik}.
\end{align}

Using \eqref{eq_31}-\eqref{eq_33}, we can attain
\begin{align}
\label{eq_4}
&\frac{1}{N_T}\sum_{i=1}^{N_T}(T_{F:\pi_{\textsf{greedy}}}^i-T_A^i)-\frac{1}{N_T}\sum_{i=1}^{N_T}(T_{F:\pi_{\textsf{alt}}}^i-T_A^i)\nonumber\\
=&\frac{1}{N_T}\sum_{i=1}^{N_T}(T_{F:\pi_{\textsf{greedy}}}^i-T_{F:\pi_{\textsf{alt}}}^i)\nonumber\\
=&\frac{1}{N_T}\sum_{i=1}^{N_T}(t_{ik}-t_{a_{i}})\nonumber\\
=&\frac{1}{N_T}\sum_{i=1}^{N_T}(t_{ik}-t_{b_{i}})\nonumber\\
\leq&0.
\end{align}
Since the distributions of $\omega_A$ and $\omega_D$ are always the same for any work-conserving scheme, \eqref{eq_4} tells us that the greedy scheme achieves the minimum average delay among all the work-conserving schemes, which completes our proof.

%

%
%
%
%
\end{proof}

\begin{figure}[!t]
\centering
\includegraphics[width=1.0\columnwidth]{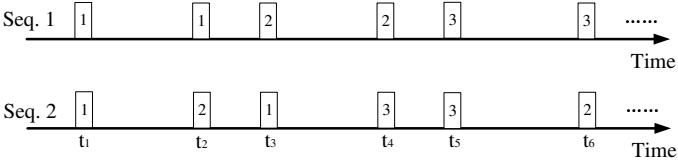}
\caption{Different departure sequences under same sample path}
\label{fig_ser}
\end{figure}

 Fig. \ref{fig_ser} shows an example when $k=2$, i.e., each request requires two effective chunks. In the figure, we plot two same sample paths of effective
chunk departure, where  each rectangle represents an effective
chunk and the number means which request this effective chunk belongs to.
   Sequence $1$ represents the effective chunks belonging under
the greedy scheme, where the chunks depart in order. On the contrary,
 some other work-conserving scheme may have a non-ordered effective chunks departure, such as Sequence $2$. Although for a particular request, its delay
may be greater in greedy scheme, such as request 3 departs later in Sequence
1 than Sequence 2 in the figure, the summation of delay under greedy scheme is always no greater than any other work-conserving scheme, as
$t_2+t_4+t_6\leq t_3+t_5+t_6$ in the example.

 \section{Simulations}\label{simulation}

In this section, we conduct experiments under exponentially distributed downloading time and real traces plotted in Section~\ref{measurement}.

\subsection{Simulation Setup}

 We simulate a system with $L=16$ threads. The downloading requests arrive
 as a Poisson process with parameter $\lambda$. In our simulations, we set
$\lambda=50$, and the number of arrival is set to be 62500.  The expected downloading
 time for any individual thread is assumed to be  $1/\mu$. Thus, the load is given
 by
\begin{align}
\label{eq_8}
\rho=\frac{\lambda}{L\mu}.
\end{align}

Besides the greedy scheme and the sharing scheme, we will also simulate the
round-robin scheme as follows:

\textsf{The Round-robin scheme}: At any time when some threads become idle and the queue is not empty, the
dispatcher allocates these idle threads to all the requests in the queue
in a round-robin way, i.e., the first idle thread to the first request, the
second idle request to the second request, etc..

Notice that the round-robin scheme is also  work-conserving. De facto, the
Round-robin
scheme actually works closely to the greedy for low to medium arrival rates as the probability of finding two requests waiting in the queue is low.
 We take the average
delay over 1000 sample paths for each experiment.

\subsection{Exponential Distribution Case}

First, we show the result under the case of $k=1$. We set $n=L+k-1$. We plot the expected
delay versus different load $\rho$ under the greedy scheme, the round-robin
scheme and the sharing scheme.
As shown in Fig. \ref{fig_expkequalone}, we can see that the delay performance
of the greedy scheme and round-robin scheme are the same, which is much smaller
than the delay of the sharing scheme. The reason is that   both of
the greedy and round-robin scheme are work-conserving, while the sharing
scheme is not. This observation validates our result in Theorem \ref{theorem_1}.

\begin{figure}[!t]
\centering
\includegraphics[width=0.8\columnwidth]{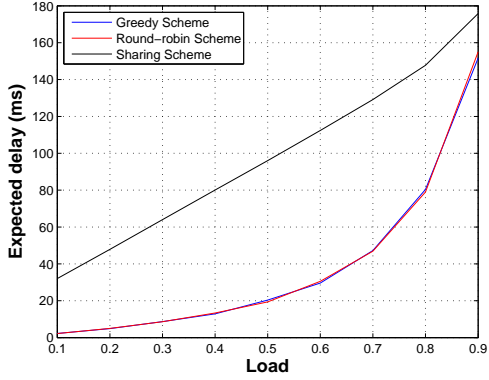}
\caption{Expected delay under greedy, round-robin and sharing schemes when $k=1$}
\label{fig_expkequalone}
\end{figure}

Fig. \ref{fig_expkgreaterone} shows the expected delay under both the greedy
and the round-robin scheme  when  $k=2$ (Since the sharing scheme has a much
worse expected delay, in order to clearly show the gap between the two work-conserving schemes, we do not plot the sharing scheme in the figure). We can see that the greedy
scheme  outperforms the round-robin scheme in terms of expected delay,
which verifies our result in Theorem \ref{theorem_3}. It is indeed  surprising that the greedy scheme outperforms the Round-robin scheme as one can intuitively think that round robin makes more opportunistic use of parallel threads. However, the capacity
regions are the same under both schemes, as we have proven in Theorem \ref{theorem_2}.

In Fig. \ref{fig_expecteddelayvsk}, we fix $\rho=0.1$. It shows that expected delay versus different values of $k$. It can be also noticed that the expected
delay under the greedy scheme is strictly smaller than the expected delay
under the round-robin scheme. And the gap increases as $k$ becomes larger.
The reason is that when $k$ increases, the effective chunk departure
sequence under round-robin scheme has more randomness and becomes more disordered. From the discussion following Theorem \ref{theorem_3},  we know that  the ordered departure of requests under the greedy scheme leads to the smallest delay. Thus, a larger delay
is expected if more disorder occurs.

\begin{figure}[!t]
\centering
\includegraphics[width=0.8\columnwidth]{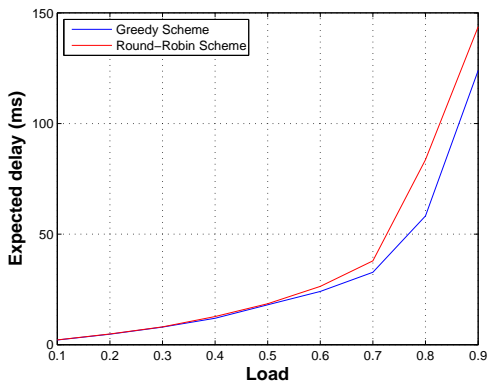}
\caption{Expected delay under greedy and round-robin schemes   when $k=2$}
\label{fig_expkgreaterone}
\end{figure}

\begin{figure}[!t]
\centering
\includegraphics[width=0.8\columnwidth]{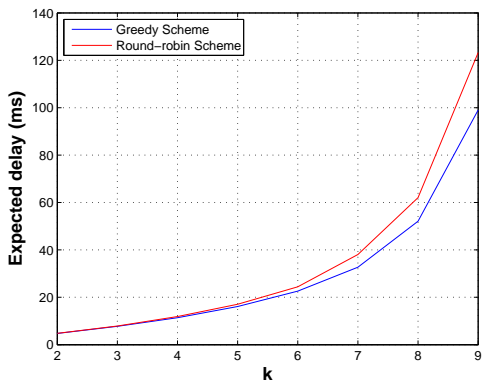}
\caption{Expected delay  versus $k$ under greedy and round-robin schemes}
\label{fig_expecteddelayvsk}
\end{figure}

\subsection{Real Traces}

{Not surprisingly, the simulations using exponential service times for individual threads are in line with the optimality of work-conserving schemes for $k=1$ and the greedy scheme in general for $k\geq 1$ established in the previous sections. As we have indicated in Section~\ref{measurement}, real public clouds such as Amazon S3 have in general non-negligible constant overheads for all files sizes. Thus, our results on delay optimality cannot directly be used for real systems, which calls for further investigations. In \cite{addShengboOriginal}, for instance, it is clearly shown that adding more redundancy reduces the system throughput and suggests that FEC should be adaptive to the system load to improve system performance.

In Figure~\ref{fig_tradeoff}, we plot average latency using real traces collected from Amazon S3 for 1 Mbyte files in North California region in 2012.  In the figure, $k=2$ and each chunk size is 1Mbyte (thus, the original file is of 2Mbyte in size). Thus, to recover the file at least 2 chunks must be downloaded. Using the same traces for chunk service times, we compared non-preemptive first come first serve strategies. Each point in the horizontal axis corresponds to a different strategy that simply dictates how many chunks are requested per file request. As in the previous section, we assume that there are $L = 16$ threads. The file arrival rate is set such that $\rho = 0.05$, i.e., it is low enough to keep the system stable regardless of how many chunks are requested per file. The left-most point in the curve corresponds to the sharing strategy that sends $k=2$ requests per job.  Right-most two points are equivalent to the greedy strategy that allocates all threads to the same job request until $k$ chunks are downloaded for that job. The intermediate strategies trade-offs between maximum sharing vs. all greedy strategy by allocating more and more chunk requests for the same job. As it is clear in the figure that the average system delay first reduces and than increases until the scheduling policies converge to the greedy strategy. In this particular case, downloading 10 chunks per job (in a non-work-conserving fashion!) becomes the best strategy among the fixed FEC rate strategies. Thus, even when the system is operating within the capacity region the queueing delay penalty can wipe out the benefits of service time improvements achieved with using more number of redundant requests. We plot the service time (i.e., the time it takes from the time one of the threads starts serving the first chunk of an object until the k-th chunk of the same object is completely downloaded) in the same figure. Clearly, there is diminishing return for improving the service time with more redundant requests. In contrast, the queueing delay gets worse with higher redundancy.

\begin{figure}[!t]
\begin{center}
\includegraphics[width=0.95\linewidth]{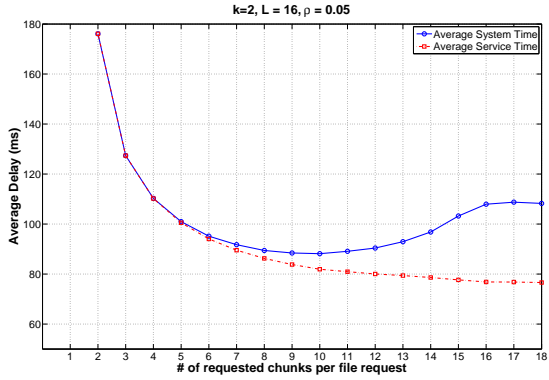}
\caption{Greedy scheme is not delay-optimal in general over real clouds. }
\label{fig_tradeoff}
\end{center}
\end{figure}

\begin{figure}[!t]
\begin{center}
\includegraphics[width=0.95\linewidth]{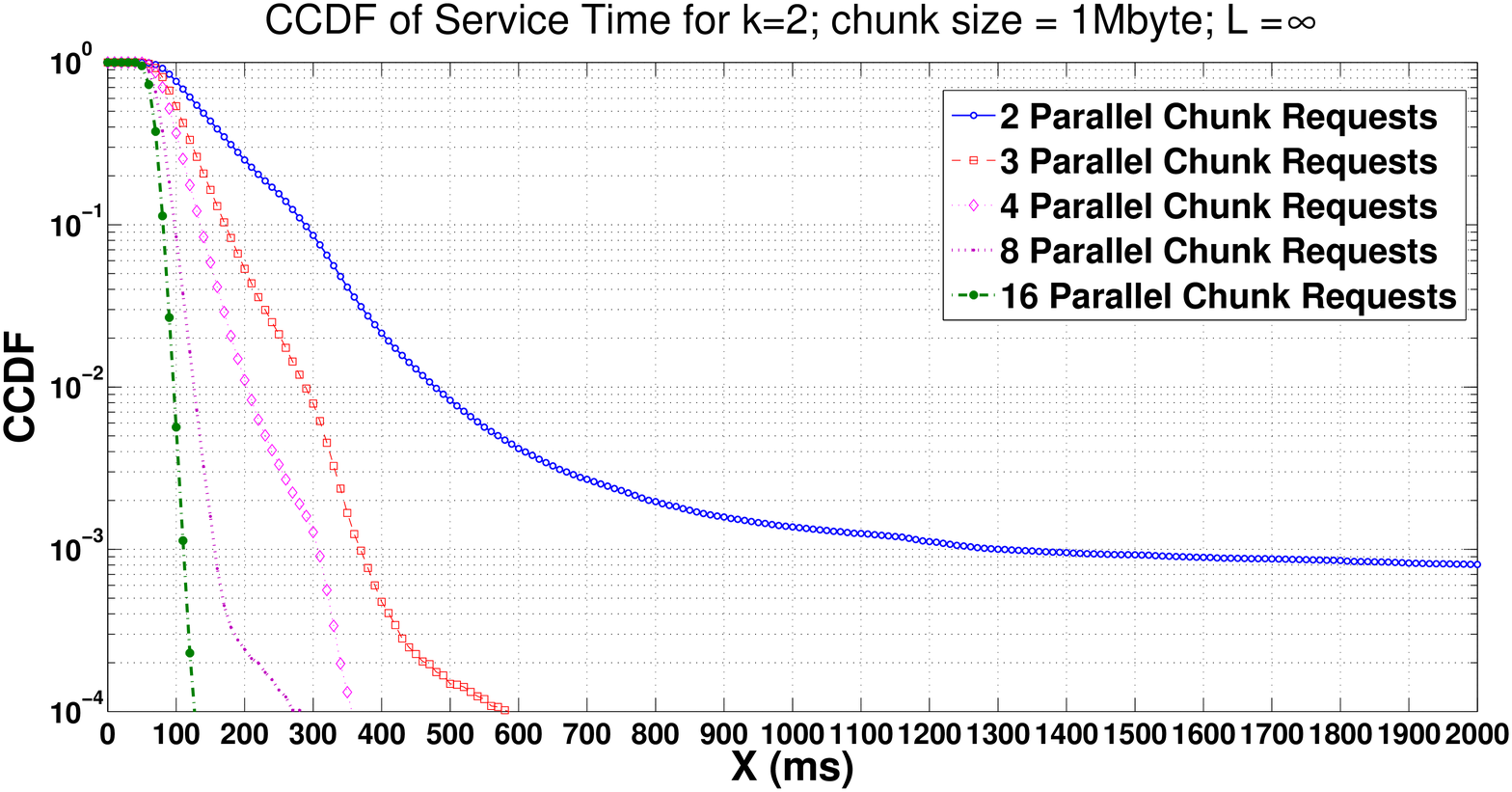}
\caption{Complementary cumulative distribution function  (CCDF) of service time with unlimited bandwidth for a 2Mbyte file and k=2. }
\label{unlimited_bw}
\end{center}
\end{figure}

To illustrate how sending redundant requests improve the service time, we plot the CCDF for a 2Mbyte file. As before we set the chunk size to 1Mbyte (i.e., $k=2$), but this time assumed that $L=\infty$ to provide a lower bound on service time distribution with unlimited bandwidth. Fig.~\ref{unlimited_bw} shows CCDF curves when 2, 3, 4, 8 and 16 encoded chunks are requested simultaneously. The individual chunk service times are simulated using the same Amazon S3 traces as in Fig.~\ref{fig_tradeoff}. Notice that the distribution gets sharpened and when 16 chunks are requested in parallel, the shape comes close to a deterministic service time. Under deterministic service times, we know that the sharing scheme is the delay optimal strategy. Therefore, even under relatively high bandwidth conditions (e.g., $L >> n$), adding redundancy beyond a certain threshold should not bring any benefit and a sharing strategy that allocates a limited number of threads to each file request should perform better than a pure greedy strategy that allocates all threads to the same job request. These insights suggest that the theoretical insights should be applied with care only in situations when the exponential assumption is reasonable.

}




%
%
 \section{Conclusion}\label{conclusion}

In this paper, we study delay performance of  downloading
data
from cloud storages by leveraging multiple parallel    threads, assuming that the
data in the clouds has been encoded using FEC codes. This leads to a new
queueing model. {We roughly approximate the downloading time for any individual thread as
an \textit{i.i.d. }exponentially distributed random variable.}
We show that the any work-conserving scheme is delay-optimal {among all on-line scheduling schemes} when $k=1$. When $k>1$, we prove that a simple greedy scheme
is delay optimal {among all on-line scheduling schemes}.
We validate our results through simulations with exponentially distributed
service time. {There are two interesting important open directions: One is to analyze the data retrieving delay of different scheduling schemes in the non-exponential case. Some initial efforts were made recently in \cite{addShengboOriginal,Joshi-2012,Guanfeng2014,shah-Allerton-2013}.
Another open problem is to analyze the data retrieving delay of different scheduling schemes when the condition $n \geq L+k-1$ does not hold.}


\bibliographystyle{IEEEtran}
\small
\bibliography{ref}

\end{document}